\def\draft{1}
\newcommand{\YSHI}[1]{\ifthenelse{\equal{\draft}{1}}{{\color{red}{#1}}}{#1}}
\newtheorem{theorem}{Theorem}
\newtheorem{proposition}[theorem]{Proposition}
\newtheorem{definition}[theorem]{Definition}
\newtheorem{corollary}[theorem]{Corollary}
\newcommand{\commentout}[1]{}
\def\Tr{\textnormal{Tr}}
\def\Dist{\textnormal{Dist}}
\begin{document}

\title{Certified randomness between mistrustful players}

\author{Carl A.~Miller}
\email{carlmi@umich.edu}
\affiliation{National Institute of Standards and Technology, Gaithersburg, MD  20899, USA, \\
Joint Center for Quantum Information and Computer Science, University of Maryland, College Park, MD  20742, USA }

\author{Yaoyun Shi}
\email{shiyy@umich.edu}
\affiliation{Dept.~of Electrical Engineering and Computer Science, University of Michigan, Ann Arbor, MI  48109, USA}

\date{\today}

\begin{abstract}
\noindent
It is known that if two players achieve a superclassical score at a nonlocal game $G$, then their
outputs are certifiably random -- that is, regardless of the strategy used by the players, a third party
will not be able to perfectly predict their outputs (even if he were given their inputs).  
We prove that for any complete-support game $G$, there is an explicit nonzero function $F_G$
such that if Alice and Bob achieve a superclassical score of $s$ at $G$, then Bob has a probability
of at most $1 - F_G ( s )$ of correctly guessing Alice's output after the game is played.
Our result implies that certifying global randomness through such games must necessarily introduce local randomness.
\end{abstract}


\maketitle

\section{Introduction}

Bell inequality violations produce \textit{certified} randomness, a  fact which is at the center of protocols for device-independent quantum
cryptography.  When two parties Alice and Bob play a nonlocal game $G$ and achieve
a score that exceeds the best classical score $\omega_c ( G )$, their output can be guaranteed
to have a certain level of min-entropy (see, e.g., Figure 2 in \cite{pam:2010}).   A natural question then arises:
is certified randomness also generated by one player from the perspective of the other player?  Specifically, 
if $A, X$ denote Alice's input and output and $Z$ denotes Bob's device, is there a certified lower bound for the conditional
min-entropy $H_{min} ( X \mid A Z)$ after the game is played?  Besides helping us understand the nature of certified randomness, this particular
kind of randomness (local randomness) has applications in mutually mistrustful cryptographic settings, where
Alice and Bob are cooperating but have different interests.

Quantifying local randomness is challenging because many of the standard tools do not apply.
Lower bounds for global randomness such as those in Figure 2 in \cite{pam:2010} are based
on constraints on the probability distribution generated by Alice and Bob and are not directly useful.
Even tools that were generated specifically for the purpose of addressing quantum side information, such as those
in our previous paper on randomness expansion \cite{MS-STOC14}, assume that the quantum information remains static during the 
generation of the randomness, and are not applicable.  But indeed, one of the reasons local randomness can appear is because 
Bob's measurement causes him to lose information that he otherwise could have retained
about Alice's system (as in, e.g., the optimal strategy for the CHSH game, where Alice and Bob
make incompatible measurements on a maximally entangled qubit-pair).

Does the generation of certified randomness always involve the generation
of \textit{local} certified randomness (i.e., randomness known to only one player)?  The answer
is not obvious: for example, in the non-signaling setting, Alice and Bob could share a PR-box\footnote{That is,
the unique $2$-part non-signaling resource whose input bits $a,b$ and output bits $x,y$ always
satisfy $x \oplus y = a \wedge b$.} which generates $1$ bit of certified randomness per use, but
no new local randomness -- Bob could perfectly guess Alice's output from his own if he were given
Alice's input. 

Motivated by the above, we prove the following result in this paper (see Theorem~\ref{robustthm} for a formal
statement).
\begin{theorem}[Informal]
\label{robustthminformal}
Let $G$ be a complete-support game\footnote{That is, a game in which each
input pair occurs with nonzero probability.}  and $\epsilon\in(0,1)$.  Suppose that Alice and Bob use a strategy for $G$
that achieves a score that is $C_G \sqrt{\epsilon}$ above the best possible classical score, where $C_G>0$ is a constant which depends on the game. Then, at the conclusion of the strategy, and given Alice's input, Bob can guess
her output with probability at most $(1- \epsilon )$.
\end{theorem}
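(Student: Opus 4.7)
The plan is to prove the contrapositive: assume Bob has a guessing strategy for $X$ given $(A,Z)$ with success probability at least $1-\epsilon$, and deduce that the quantum score exceeds $\omega_c(G)$ by at most $C_G\sqrt{\epsilon}$. To begin, I would formalize Bob's guess as a family of POVMs $\{P^{a,b,y}_x\}_x$ acting on the residual system he retains after his measurement produces outcome $y$ on input $b$; writing $\sigma^{a,b,x,y}$ for the conditional post-measurement residual state, the hypothesis reads
\[
\sum_{a,b,x,y} p(a,b)\,p(x,y\mid a,b)\,\Tr\!\left(P^{a,b,y}_x\,\sigma^{a,b,x,y}\right) \geq 1-\epsilon.
\]

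The central construction is an approximate local-hidden-variable (LHV) simulation of the quantum statistics $p(x,y\mid a,b)$. The shared randomness $\Lambda$ encodes a random sample of ``Bob's side'': a random input $B_0\sim p(B)$, the outcome $Y_0$ of Bob's actual measurement on input $B_0$, and the tuple of guesses $(\tilde X^a)_a$ obtained by applying $P^{a,B_0,Y_0}_x$ to $\sigma^{a,B_0,x,Y_0}$. Alice's classical strategy on input $A$ outputs $\tilde X^A$; Bob's classical strategy on input $B$ outputs $Y_0$ if $B=B_0$, and otherwise outputs an independent sample from the marginal $p(Y\mid B)$. The complete-support assumption ensures every $(a,b)$ contributes to the statistics, so the constant $C_G$ will be bounded below by a positive quantity depending on $\min_{a,b}p(a,b)$ and the range of the scoring rule.

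The principal obstacle will be showing that this classical strategy matches the quantum one closely enough. Two error sources must be controlled together: replacing Alice's output $X$ by Bob's guess $\tilde X^A$ costs $O(\epsilon)$ in trace distance by the guessing bound, whereas when $B\neq B_0$ Bob's classical output is drawn only from the marginal, decoupling $\tilde X^A$ from $Y$ and destroying the original correlations. To bound the latter effect I would use approximate no-signaling of $\tilde X$ across Bob's inputs (a consequence of $\tilde X\approx X$ combined with exact no-signaling of $X$), plus a Fuchs--van de Graaf step converting the $(1-\epsilon)$-fidelity of the residual state with a ``classical'' state (one in which $X$ is a deterministic function of $(A,Z)$) into an $O(\sqrt{\epsilon})$ trace-distance bound on the relevant joint distributions. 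Once these two error estimates are combined, the classical value inequality $\omega_c(G) \geq S_{\text{classical}}$ transfers back to the quantum score as $s \leq \omega_c(G) + C_G\sqrt{\epsilon}$, which is the contrapositive of the claim.
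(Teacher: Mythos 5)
Your high-level plan --- work in the contrapositive and convert a good guessing strategy into an approximately classical correlation --- is exactly the paper's (the formal theorem there is stated precisely in that contrapositive form). But the specific local-hidden-variable model you build has a gap that your proposed tools cannot close. Your hidden variable is the tuple $(B_0, Y_0, (\tilde X^a)_a)$, and Bob's simulated output is $Y_0$ only on the event $B = B_0$; otherwise he outputs a fresh sample from the marginal $p(Y\mid B)$, independent of everything else. On that complementary event, which has probability $1 - \sum_b p(b)^2$ (bounded away from zero independently of $\epsilon$), the simulated pair $(\tilde X^A, Y)$ is a product of marginals, so all correlation between the two outputs is destroyed; for CHSH this alone costs a constant in the score. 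Neither approximate no-signaling of $\tilde X$ nor a Fuchs--van de Graaf step repairs this, because those control the marginal distribution of $\tilde X$ across Bob's inputs, not its joint distribution with $Y$ at inputs $b \neq B_0$. A second, related gap: the tuple $(\tilde X^a)_a$ is obtained by applying the guessing POVMs for every $a$ to a single copy of Bob's residual system; these POVMs need not commute, so the joint sample is not even well-defined until you fix an ordering and bound the accumulated disturbance --- and controlling exactly that accumulation is the technical heart of the problem, which your outline never addresses.

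The paper's construction avoids both issues by classicalizing only Alice's side and leaving Bob's system quantum. Its key lemma (Proposition~3 and Corollary~4) is a gentle-measurement statement: a projective measurement whose outcome can be predicted from the complementary subsystem with probability $1-\delta$ perturbs the reduced state --- jointly with any classical side information, in particular Bob's measurement outcome register --- by at most $2\sqrt{\delta}+\delta$ in trace norm. Alice then copies out her outcomes for all inputs $a = 1, \ldots, n$ sequentially into registers $V_1, \ldots, V_n$; each step is nearly non-disturbing by the lemma, the resulting state is separable across the $(V_1 \cdots V_n \mid E)$ cut (hence the induced correlation is classical), and Bob still performs his genuine quantum measurements on $E$, so no decoupling of $X$ from $Y$ ever occurs. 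The per-step errors are aggregated with Cauchy--Schwarz to give the $O(\sqrt{\epsilon})$ bound with the stated constant $C_G$. To salvage your route you would essentially need to replace your fully classical hidden variable with this one-sided copy-out.
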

In our previous paper \cite{MSCertifying:2016}, we studied the case in which Bob can guess
Alice's output with probability $1$.  We proved a structural theorem for such strategies which implied
that the score achieved by such strategies must be classical.  Theorem~\ref{robustthminformal} can
be seen as a robust version of Corollary 6 from \cite{MSCertifying:2016}.

Global randomness (e.g., a random variable that is possessed by Alice but completely know to Bob) is
useful for cooperative tasks such as QKD, while local randomness is useful for tasks that do not involve full mutual
trust (see the \textit{Further Directions} section of \cite{MSCertifying:2016}, where we discuss
the notion of certified erasure).  This result shows that these complementary resources tend to occur together.

Our proof proceeds as follows.
It has been observed by previous work (e.g., \cite{vidickthesis}, \cite{Manc:2015}) that if a measurement $\{ P_i \}$
on a system $D$ from a bipartite state $\rho_{DE}$ are highly predictable via measurements on $E$, then
the measurement does not disturb the reduced state by much: $\sum_i P_i \rho_D P_i \sim \rho_D$.  We include
a simplified proof of this fact (Proposition~\ref{disturbprop}).  The interesting
consequence for our purpose is that if Alice's measurements are highly predictable to Bob, then Alice can
copy our her measurement outcomes in advance, thus making her strategy approximately classical.  We take
this a step further, and show that if Bob first performs his own measurement on $E$, then the correlation of
the outcome of his measurement with $D$ is also approximately preserved by Alice's measurements (which is not necessarily true of the original entangled state $\rho_{AB}$).
This is sufficient to show that an approximately-guessable strategy yields an approximately classical strategy.

The subtleties in the proof are in establishing the error terms that arise when Alice copies out multiple measures
from her side of the state.  We note that the proof crucially requires that the game $G$ has complete support.  (If 
$G$ does not have complete support, our results imply that there must exist a pair $(a,b)$ which generates local randomness,
but this pair may not be in the support of $G$.)  An interesting
further avenue is to explore how local randomness behaves in games without complete support.

\paragraph*{Preliminaries.}

We will use the same notation as in \cite{MSCertifying:2016}, with some additions.
For any finite-dimensional Hilbert space $V$, let
$L ( V ) $ denote the vector space of linear automorphisms of $V$.
For any $M, N \in L ( V )$, let $\left< M , N \right> := \Tr [ M^* N]$.
If $S \subseteq V$ is a subspace of $V$,
let $\mathbf{P}_S \in L ( V )$ denote orthogonal projection
onto $V$.

To avoid unnecessary repetition, throughout the paper
we fix four disjoint sets $\mathcal{A, B, X, Y }$, which denote,
respectively, the first player's input alphabet, the second 
player's input alphabet, the first player's output alphabet, and
the second player's output alphabet.
A \textit{$2$-player correlation} is a vector
$(p_{ab}^{xy} )$ of nonnegative real numbers,
indexed by $(a, b, x, y ) \in \mathcal{A} \times \mathcal{B}
\times \mathcal{X} \times \mathcal{Y}$, such that
\begin{eqnarray}
\sum_{xy} p_{ab}^{xy} = 1
\end{eqnarray}
for all $(a, b) \in \mathcal{A} \times \mathcal{B}$,
and such that the quantities
\begin{eqnarray}
\label{nonsig1}
p_a^x & := & \sum_y p_{ab}^{xy} \\
\label{nonsig2}
p_b^y & := & \sum_x p_{ab}^{xy}
\end{eqnarray}
are, respectively, independent of $b$ and independent of $a$.  (These last
two equations assert non-signaling between Alice and Bob.)

A \textit{$2$-player game} is a pair $G = (q, H)$ such that
\begin{eqnarray}
q \colon \mathcal{A} \times \mathcal{B} \to [0, 1]
\end{eqnarray}
is a probability distribution and 
\begin{eqnarray}
H \colon \mathcal{A} \times \mathcal{B} \times \mathcal{X} \times \mathcal{Y} \to [0, 1 ]
\end{eqnarray}
is a function.  The function $q$ denotes the input distribution 
of the game, and $H$ denotes the scoring function.  Thus, the expected score associated
to the correlation $(p_{ab}^{xy} )$ under the game $G$ is
\begin{eqnarray}
\sum_{abxy} q ( a, b ) H ( a, b, x, y ) p_{ab}^{xy}.
\end{eqnarray}

We will extend notation by writing $q(a) = \sum_b q(a,b)$, $q(b) = \sum_a q(a,b)$.
If $q$ is such that $q ( a, b ) > 0$ for all $(a, b ) \in \mathcal{A} \times \mathcal{B}$, then
the game is said to have \textit{complete support}.   In such a case we will also write
$\mathbf{P}_q ( a \mid b ) = q(a, b) / q ( b)$.

A \textit{$2$-player strategy} is a $5$-tuple
\begin{eqnarray}
\Gamma & = & ( D, E, \{ \{ R_a^x \}_x \}_a ,
\{ \{ S_b^y \}_y \}_b , \gamma )
\end{eqnarray}
such that $D, E$ are finite-dimensional complex Hilbert spaces,
$\{ \{ R_a^x \}_x \}_a$ is a family of $\mathcal{X}$-valued
POVMs\footnote{The acronym POVM stands for positive operator-valued
measurement.} on $D$, $\{ \{ S_b^y \}_y \}_b$
is a family of $\mathcal{Y}$-valued POVMs on $E$, and $\gamma$
is a density operator on $D \otimes E$.  The \textit{second player states}
of $\Gamma$, denoted by $\rho_{ab}^{xy}$, are defined by
\begin{eqnarray}
\rho_{ab}^{xy} & = & \Tr_D \left[ \sqrt{ R_a^x \otimes
S_b^y } \gamma \sqrt{ R_a^x \otimes S_b^y } \right].
\end{eqnarray}
We also define $\rho_a^x$ by the same
expression with $S_b^y$ replaced by the identity operator.
(The states $\rho_a^x$ are the states that occur in the system
$E$ before the second player has performed a measurement.)
Let $\rho = \Tr_ D \gamma$.

We say that the strategy $\Gamma$ \textit{achieves}
the $2$-player correlation $(p_{ab}^{xy})$ if
$p_{ab}^{xy} = \Tr [ ( R_a^x \otimes S_b^y ) \gamma ]$.  
If $(p_{ab}^{xy})$ can be achieved by a strategy in which the
state $\gamma$ is separable, the we say that
$(p_{ab}^{xy})$ is a \textit{classical} correlation.
Let $\omega_c ( G )$ denote the maximum score achieved
at $G$ by classical correlations.

\begin{definition}
Let $\{ \rho_i \}_{i=1}^n$ denote a finite set of positive semidefinite
operators on a finite dimensional Hilbert space $V$.  Then, let 
\begin{eqnarray}
\Dist \{ \rho_i \} & = & \max \sum_i \Tr ( T_i \rho_i ), 
\end{eqnarray}
where the maximum is taken over all POVMs $\{ T_i \}_{i=1}^n$
on $V$.
\end{definition}

Note that if $\sum_i \Tr ( \rho_i) = 1$, and each $\rho_i$ is nonzero, then this quantity has
the following interpretation:
if Alice gives Bob a state from the set $\{ \rho_i / \Tr ( \rho_i ) \}$
at random according to the distribution $(\Tr (\rho_i ) )_i$, then $\Dist \{ \rho_i \}$
is the optimal probability that Bob can correctly guess the state.  This quantity
is well-studied (see, for example, Chapter 5 in \cite{Spehner:2014}).

\paragraph*{Main result.} Now we develop the tools to prove our main result.

\begin{definition}
Let $\Phi \colon L ( V ) \to L ( V )$ denote a completely positive trace-preserving
map over a finite-dimensional Hilbert space $V$.  Let $\beta \in L ( V )$ denote a density
operator on $V$.  Then we say that \textbf{$\Phi$ is $\epsilon$-commutative
with $\beta$ if}
\begin{eqnarray}
\left\| \Phi ( \beta ) - \beta \right\|_1 & \leq & \epsilon.
\end{eqnarray}
\end{definition}
Note that this relation obeys a natural triangle inequality:
if $\Phi_1$ is $\epsilon_1$-commutative with $\beta$, and $\Phi_2$
is $\epsilon_2$-commutative with $\beta$, then
\begin{eqnarray*}
\left\| \Phi_2 ( \Phi_1 ( \beta )) - \rho \right\|_1 & \leq &
\left\| \Phi_2 ( \Phi_1 ( \beta ) ) - \Phi_2 ( \beta) \right\|_1 + \left\|
\Phi_2 ( \beta ) - \beta \right\|_1 \\
& \leq & \left\| \Phi_1 ( \beta) - \beta \right\|_1 + \epsilon_2 \\
& \leq & \epsilon_1 + \epsilon_2.
\end{eqnarray*}

The following proposition will be an important building block.  Our proof is a significant
simplification of a method from Lemma 29 in \cite{vidickthesis}.
(See also Lemma 2 in \cite{Manc:2015} for a related result.)

\begin{proposition}
\label{disturbprop}
Let $\Lambda \in L ( A \otimes B )$ be a density operator
and $\{ F_i \}_{i=1}^n$ a projective measurement on $A$ such that the induced states
$\Lambda_i^B := \Tr_A ( F_i \Lambda )$ satisfy
\begin{eqnarray}
\Dist \{ \Lambda^B_i \} & = & 1 - \delta.
\end{eqnarray}
Then, the superoperator $X \mapsto \sum_i F_i X F_i$ is
$(2 \sqrt{\delta} + \delta )$-commutative with $\Lambda^A := \Tr_B \Lambda$.
\end{proposition}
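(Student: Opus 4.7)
My plan is to Naimark-dilate Bob's guessing POVM into a genuine projective measurement, to package the ``Alice outputs $i$ and Bob outputs $i$'' event as a single joint projector $\Pi$ on the enlarged space, and then to transfer a standard pure-state disturbance estimate on that enlarged space down to $A$ via a partial trace.

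First I would pick a POVM $\{T_i\}$ on $B$ attaining $\Dist\{\Lambda_i^B\} = 1-\delta$, so that $\sum_i \Tr[(F_i\otimes T_i)\Lambda] = 1-\delta$. Fix a purification $|\psi\rangle \in A\otimes B\otimes C$ of $\Lambda$ and set $|\chi\rangle := (I_A\otimes W\otimes I_C)|\psi\rangle$, where $W|b\rangle = \sum_i \sqrt{T_i}|b\rangle\otimes |i\rangle_{B'}$ is the Naimark isometry dilating $\{T_i\}$. On the extended space $A\otimes B\otimes B'\otimes C$, Bob's POVM element $T_i$ is implemented by the honest projector $I_B\otimes |i\rangle\langle i|_{B'}$, and
\[
\Pi := \sum_i F_i \otimes I_B \otimes |i\rangle\langle i|_{B'}
\]
is genuinely a projector, because both $\{F_i\}$ and $\{|i\rangle\langle i|\}$ are orthogonal families. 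A direct Naimark calculation gives $\langle\chi|\Pi|\chi\rangle = 1-\delta$.

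Next I would control $\||\chi\rangle\langle\chi|-\Pi|\chi\rangle\langle\chi|\Pi\|_1$ by expanding the difference into the three cross-term pieces and bounding each by $\|XY\|_1\le\|X\|_2\|Y\|_2$ together with $\|(I-\Pi)|\chi\rangle\|=\sqrt\delta$. Taking the partial trace $\Tr_{BB'C}$ of the resulting inequality, the left side becomes $\Lambda^A$ while the right side collapses to $\sum_i F_i\beta_i F_i$, where $\beta_i := \Tr_{BB'C}[(I\otimes |i\rangle\langle i|_{B'})|\chi\rangle\langle\chi|]$ is the sub-normalized marginal on $A$ conditioned on Bob outputting $i$; the $|i\rangle\langle j|_{B'}$ off-diagonal contributions vanish because the two distinct $B'$ sectors are orthogonal. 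This yields an estimate of the form $\|\Lambda^A - \sum_i F_i\beta_i F_i\|_1 \le 2\sqrt\delta + \delta$.

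Finally I would compare $\sum_i F_i\beta_i F_i$ with $\Phi(\Lambda^A) = \sum_i F_i\Lambda^A F_i$. Since $\Lambda^A=\sum_j\beta_j$, one has
\[
\Phi(\Lambda^A) - \sum_i F_i\beta_i F_i = \sum_{i\ne j} F_i\beta_j F_i,
\]
a sum of positive operators whose total trace is $\sum_{i\ne j}\Tr(F_i\beta_j) = \Pr[A\ne B] = \delta$. A triangle inequality then gives the claimed bound on $\|\Phi(\Lambda^A) - \Lambda^A\|_1$. The one place I expect friction is calibrating the constants: a crude combination of the two pieces produces $2\sqrt\delta + 2\delta$ rather than $2\sqrt\delta + \delta$, so one of the estimates needs to be sharpened, e.g.\ by comparing $|\chi\rangle$ to the normalized projection $\Pi|\chi\rangle/\sqrt{1-\delta}$ (whose trace distance to $|\chi\rangle$ is exactly $\sqrt\delta$) and carefully absorbing the $1-\delta$ rescaling into the bound on $\|\Phi(\Lambda^A) - \sum_i F_i\beta_i F_i\|_1$ rather than paying for it separately.
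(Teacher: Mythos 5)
Your proposal is, at its core, the same argument as the paper's: dilate the guessing POVM to a projective measurement, purify, bound the disturbance of the ``agreement'' part by Cauchy--Schwarz on cross terms, and observe that the remaining discrepancy $\sum_{i\neq j} F_i \beta_j F_i$ is positive semidefinite with trace exactly $\delta$. Indeed, if one writes the purification as a block matrix $M$ with row blocks indexed by the ranges of the $F_i$ and column blocks by the dilated guessing projectors (as the paper does), your intermediate operator $\sum_i F_i \beta_i F_i$ is precisely the paper's $\overline{M}\,\overline{M}^*$, and your two steps correspond to its bounds $\| MM^* - \overline{M}\,\overline{M}^* \|_1 \leq 2\sqrt{\delta}$ and $\| \overline{M}\,\overline{M}^* - \sum_i F_i \rho F_i \|_1 = \delta$. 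The one genuine defect is the one you flagged: writing $|u\rangle = \Pi |\chi\rangle$, $|v\rangle = (I-\Pi)|\chi\rangle$, the symmetric expansion $|\chi\rangle\langle\chi| - \Pi|\chi\rangle\langle\chi|\Pi = |u\rangle\langle v| + |v\rangle\langle u| + |v\rangle\langle v|$ costs $2\sqrt{\delta(1-\delta)} + \delta$, and your proposed repair via the normalized projection $\Pi|\chi\rangle/\sqrt{1-\delta}$ does not obviously recover the lost $\delta$ either: after rescaling, the correction term $\tfrac{\delta}{1-\delta}\sum_i F_i\beta_i F_i$ and the off-diagonal term $\sum_{i\neq j}F_i\beta_j F_i$ are each positive of trace $\delta$, so a naive triangle inequality again pays $2\delta$.

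The fix is simply to telescope asymmetrically, which is exactly what the paper's decomposition $MM^* - \overline{M}\,\overline{M}^* = M(M-\overline{M})^* + (M - \overline{M})\overline{M}^*$ accomplishes. In your notation,
\[
|\chi\rangle\langle\chi| - \Pi|\chi\rangle\langle\chi|\Pi \;=\; |\chi\rangle\langle\chi|(I-\Pi) + (I-\Pi)|\chi\rangle\langle\chi|\Pi \;=\; |\chi\rangle\langle v| + |v\rangle\langle u|,
\]
whose trace norm is at most $\| \chi \| \cdot \| v \| + \| v \| \cdot \| u \| = \sqrt{\delta} + \sqrt{\delta(1-\delta)} \leq 2\sqrt{\delta}$. (Equivalently: the difference is a rank-two Hermitian operator on $\mathrm{span}\{|u\rangle, |v\rangle\}$ with trace $\delta$ and determinant $-\delta(1-\delta)$, so its trace norm is exactly $\sqrt{\delta(4-3\delta)} \leq 2\sqrt{\delta}$.) Since the partial trace is contractive in trace norm, this gives $\| \Lambda^A - \sum_i F_i \beta_i F_i \|_1 \leq 2\sqrt{\delta}$, and combined with your second step the total is $2\sqrt{\delta} + \delta$ as claimed. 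With that single adjustment your proof is complete and correct.
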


\begin{proof}
By assumption, there exists a POVM $\{G_i \}$ on $B$ such that
\begin{eqnarray}
\sum_i \Tr [ (F_i \otimes G_i ) \Lambda ] & = & 1 - \delta.
\end{eqnarray}
By standard arguments, we can assume without loss of generality that
that $\{G_i \}$ is a projective measurement and that $\Lambda$ is pure.\footnote{
We can construct an enlargement
$B \subseteq \overline{B}$ such that $\mathbf{P}_B \overline{G}_i
\mathbf{P}_B = G_i$ for some projective measurement
$\{ \overline{G}_i \}$ on $\overline{B}$, and we can construct
an additional Hilbert space $E$ and a pure state $\overline{\Lambda}
\in L ( A \otimes B \otimes E )$ such that $\Tr_E \overline{\Lambda}
= \Lambda$.  The joint probability distribution of the measurements
$\{ F_i \}$ and $\{ \overline{G}_i \otimes I_E \}$ on $\overline{\Lambda}$
are the same as those of $\{ F_i \}$ and $\{ G_i \}$ on $\Lambda$.}

There is a linear map $M \colon \mathbb{C}^s \to \mathbb{C}^r$ 
such that $\Tr_A \Lambda = M^* M$ and $\rho = \Tr_B \Lambda = M M^*$.
Upon choosing an appropriate basis for $A$ and $B$, we
can write $M$ with a block form determined by the spans of $\{F_i \}$ and $\{ G_j \}$:
\begin{eqnarray}
M & = & \left[ \begin{array}{c|c|c|c}
M_{11} & M_{12} & \cdots & M_{1n} \\
\hline
M_{21} & M_{22} & \cdots & M_{2n} \\
\hline
\vdots & & \ddots \\
\hline
M_{n1} & M_{n2} & \cdots & M_{nn}
\end{array} \right].
\end{eqnarray}
Let
\begin{eqnarray}
\overline{M} & = & \left[ \begin{array}{c|c|c|c}
M_{11} & 0 & \cdots & 0 \\
\hline
0 & M_{22} & \cdots & 0 \\
\hline
\vdots & & \ddots \\
\hline
0 & 0 & \cdots & M_{nn}
\end{array} \right].
\end{eqnarray}
Note that the probability of obtaining outcome $F_i$ for the measurement
on $A$ and outcome $G_j$ for the measurement on $B$ is
given by the quantity $\left\| M_{ij} \right\|_2^2$, and  the probability
that the outcomes of the measurements disagree is exactly $\left\| M -
\overline{M} \right\|_2^2$.  We have
\begin{eqnarray}
\left\| M - \overline{M} \right\|_2^2 & = & \delta.
\end{eqnarray}
Additionally, we can compare $\overline{M} \overline{M}^*$ to the post-measurement
state $\sum_i F_i \rho F_i$.  The latter quantity is given by
\begin{eqnarray*}
\left[ \begin{array}{c|c|c|c}
\sum_k M_{1k} M_{1k}^* & 0 & \cdots & 0 \\
\hline
0 &  \sum_k M_{2k} M_{2k}^* & \cdots & 0 \\
\hline
\vdots & & \ddots \\
\hline
0 & 0 & \cdots & \sum_k M_{nk} M_{nk}^* \\
\end{array} \right], 
\end{eqnarray*}
and therefore the difference $(\sum_i F_i \rho F_i  - \overline{M} \overline{M}^* )$
is equal to
\begin{eqnarray*}
\left[ \begin{array}{c|c|c|c}
\sum_{k\neq 1} M_{1k} M_{1k}^* & 0 & \cdots & 0 \\
\hline
0 &  \sum_{k \neq 2} M_{2k} M_{2k}^* & \cdots & 0 \\
\hline
\vdots & & \ddots \\
\hline
0 & 0 & \cdots & \sum_{k \neq n} M_{nk} M_{nk}^*, \\
\end{array} \right] \\
\end{eqnarray*}
which is a positive semidefinite operator
whose trace is exactly $\sum_{i \neq j} \left\| 
M_{ij} \right\|^2_2 = \delta$.  Thus,
\begin{eqnarray}
\left\| \sum_i F_i \rho F_i - \overline{M} \overline{M}^* \right\|_1 
& = & \delta.
\end{eqnarray}
Therefore we have the following, using the Cauchy-Schwarz inequality:
\begin{eqnarray*}
&& \left\| \rho - \sum_i F_i \rho F_i \right\|_1 \\
& = &
\left\| M M^* - \sum_i F_i \rho F_i \right\|_1 \\
& = & \left\| M ( M - \overline{M}^* ) + (M - \overline{M}) \overline{M}^* + \overline{M} \overline{M}^* - 
\sum_i F_i \rho F_i \right\|_1 \\
& \leq & \left\| M ( M - \overline{M}^* ) \right\|_1 + \left\| (M - \overline{M}) \overline{M}^* \right\|_1 
\\ && + \left\| \overline{M} \overline{M}^* - 
\sum_i F_i \rho F_i \right\|_1  \\
& \leq & \left\| M \right\|_2 \left\|  M - \overline{M}^*  \right\|_2 + \left\| M - \overline{M} \right\|_2 \left\| \overline{M}^* \right\|_2 + \delta \\
& \leq & 1 \cdot \sqrt{\delta} + \sqrt{\delta} \cdot 1 + \delta \\
& \leq & 2 \sqrt{\delta} + \delta,
\end{eqnarray*}
as desired.
\end{proof}

\begin{corollary}
\label{classicalinfocor}
Let $\Lambda \in L ( A \otimes B \otimes C)$ be a density operator
which is classical\footnote{That is, $\Lambda = \sum_k \Lambda_k \otimes
\left| c_k \right> \left< c_k \right|$ for some orthonormal basis $\{c_1, \ldots, c_k \} 
\subseteq C$.} on $C$.  Suppose that $\{ F_i \}_{i=1}^n$ is a projective measurement on $A$ such that the induced
states $\Lambda^{BC}_i := \Tr_A ( F_i \Lambda )$ satisfy
\begin{eqnarray}
\label{assumptionrepeat}
\Dist \{ \Lambda^{BC}_i \} & = & 1 - \delta.
\end{eqnarray}
Then, the superoperator $X \mapsto \sum_i (F_i \otimes I ) X (F_i \otimes I)$ is
$(2 \sqrt{\delta} + \delta )$-commutative with $\Lambda^{AC}$.
\end{corollary}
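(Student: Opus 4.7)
The plan is to reduce to Proposition~\ref{disturbprop} by working conditionally on the classical register $C$. Write $\Lambda = \sum_k p_k \, \tau_k \otimes \left| c_k \right> \left< c_k \right|$ with each $\tau_k$ a density operator on $A\otimes B$ and $\sum_k p_k = 1$. Since the states $\Lambda^{BC}_i$ are block-diagonal in the basis $\{\left| c_k \right>\}$, an optimal distinguishing POVM may be taken to have the same block form, $T_i = \sum_k T_i^{(k)} \otimes \left| c_k \right> \left< c_k \right|$, with each $\{T_i^{(k)}\}_i$ a POVM on $B$; off-diagonal entries of $T_i$ do not contribute to $\Tr[ T_i \Lambda^{BC}_i]$. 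Setting $1 - \delta_k := \sum_i \Tr\bigl[ T_i^{(k)} \Tr_A ( F_i \tau_k ) \bigr]$, the hypothesis~(\ref{assumptionrepeat}) then becomes $\sum_k p_k \delta_k = \delta$.

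For each $k$, I would apply Proposition~\ref{disturbprop} to $\tau_k$ with the projective measurement $\{F_i\}$ on $A$: the induced $B$-states have distinguishability at least $1 - \delta_k$, so the proposition gives
\begin{eqnarray*}
\left\| \sum_i F_i \tau_k^A F_i - \tau_k^A \right\|_1 & \leq & 2 \sqrt{\delta_k} + \delta_k,
\end{eqnarray*}
where $\tau_k^A := \Tr_B \tau_k$. The superoperator $\Phi : X \mapsto \sum_i ( F_i \otimes I ) X ( F_i \otimes I )$ preserves the block-diagonal structure on $C$, and $\Lambda^{AC} = \sum_k p_k \, \tau_k^A \otimes \left| c_k \right> \left< c_k \right|$, so the trace norm splits:
\begin{eqnarray*}
\left\| \Phi ( \Lambda^{AC} ) - \Lambda^{AC} \right\|_1 & = & \sum_k p_k \left\| \sum_i F_i \tau_k^A F_i - \tau_k^A \right\|_1 \\
& \leq & 2 \sum_k p_k \sqrt{\delta_k} + \delta.
\end{eqnarray*}
Jensen's inequality applied to the concave function $\sqrt{\cdot}$ gives $\sum_k p_k \sqrt{\delta_k} \leq \sqrt{\delta}$, yielding the stated bound $2\sqrt{\delta} + \delta$.

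The only mildly delicate point I anticipate is the block-diagonal reduction of the distinguishing POVM, but this follows directly from the classicality of $\Lambda$ on $C$. After that, the corollary is essentially Proposition~\ref{disturbprop} applied fiber-wise over $C$, with Jensen's inequality converting the average of $\sqrt{\delta_k}$ into $\sqrt{\delta}$ at no loss in the overall constant.
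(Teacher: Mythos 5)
Your proof is correct, but it takes a genuinely different route from the paper's. The paper's proof is a two-line copying trick: it adjoins a register $\overline{C}$, copies the classical basis of $C$ into $\overline{C}$ (which leaves $\Lambda^{ABC}$ untouched, so the hypothesis (\ref{assumptionrepeat}) survives), and then invokes Proposition~\ref{disturbprop} once, on the bipartition $A\overline{C} \mid BC$ with the projective measurement $\{F_i \otimes I_{\overline{C}}\}$; the conclusion for $\Lambda^{A\overline{C}} \cong \Lambda^{AC}$ is then immediate, with no averaging step. You instead decompose $\Lambda$ into its fibers over $C$, apply Proposition~\ref{disturbprop} separately to each $\tau_k$, and recombine using the block-diagonal splitting of the trace norm together with Jensen's inequality $\sum_k p_k \sqrt{\delta_k} \leq \sqrt{\delta}$. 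Both arguments use the classicality of $C$ in an essential way (the paper to make the copy non-disturbing, you to justify the block-diagonal reduction of the distinguishing POVM and the splitting of the norm), and both land on the same constant $2\sqrt{\delta}+\delta$. Your version is longer but more transparent about where the conditioning on $C$ enters, and it keeps per-fiber error terms $\delta_k$ explicit, which could be useful if one wanted a refinement that does not average over $C$; the paper's version is shorter and reuses Proposition~\ref{disturbprop} verbatim without needing concavity. One small point worth making explicit in your write-up: Proposition~\ref{disturbprop} is stated with an equality $\Dist = 1-\delta$, so when you apply it to a fiber whose distinguishability is merely \emph{at least} $1-\delta_k$, you should note that the bound $2\sqrt{\delta}+\delta$ is monotone in $\delta$, so the weaker hypothesis suffices. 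This is routine and does not affect correctness.
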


\begin{proof}
Let $\overline{C}$ be a Hilbert space
which is isomorphic to $C$, and let $\overline{\Lambda} \in L ( A \otimes B \otimes C \otimes \overline{C})$
be the state that arises from $\Lambda$ by copying out along the standard
basis: $\left| c_i \right> \mapsto \left| c_i \overline{c_i } \right>$.  This copying
leaves the state $ABC$ unaffected, so assumption (\ref{assumptionrepeat}) still applies.
Thus by Proposition~\ref{disturbprop}, the operator
$X \mapsto \sum_i ( F_i \otimes I ) X (F_i \otimes I )$ is $(2 \sqrt{\delta} + \delta )$-commutative
with $\Lambda^{A \overline{C}}$, and the same
holds for the isomorphic state $\Lambda^{AC}$.
\end{proof}

\begin{proposition}
\label{uniformprop}
Let
\begin{eqnarray}
\Gamma & = & ( D, E, \{ \{ R_a^x \}_x \}_a , 
\{ \{ S_b^y \}_y \}_b , \gamma )
\end{eqnarray}
be a two-player strategy.
Let
\begin{eqnarray}
\delta & = & 1 - \frac{1}{| \mathcal{A} | |\mathcal{B } |} \sum_{aby} \Dist \{ \rho_{ab}^{xy} \mid x \in \mathcal{X} \}.
\end{eqnarray}
Then, there exists a classical correlation $(\overline{p}_{ab}^{xy} )$ such that
\begin{eqnarray}
\frac{1}{| \mathcal{A} | |\mathcal{B} | } \sum_{abxy} \left| p_{ab}^{xy} - \overline{p}_{ab}^{xy} \right| & \leq & 
\sqrt{ 3 \delta} \left| \mathcal{A} \right|.
\end{eqnarray}
\end{proposition}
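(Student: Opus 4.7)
My plan is to lift Corollary~\ref{classicalinfocor} to a hybrid argument over Alice's inputs, by folding Bob's input and output into a single classical register.

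First, set $\delta_a := 1 - \frac{1}{|\mathcal{B}|}\sum_{b,y}\Dist\{\rho_{ab}^{xy}\mid x\in\mathcal{X}\}$, so that $\frac{1}{|\mathcal{A}|}\sum_a \delta_a = \delta$, and use a Naimark dilation to replace each POVM $\{R_a^x\}$ by a projective measurement $\{\tilde R_a^x\}$ on $\tilde D \supseteq D$, acting on $\tilde \gamma := \gamma \otimes |0\rangle\langle 0|$. Build
\begin{equation*}
\Lambda \;:=\; \frac{1}{|\mathcal{B}|}\sum_{b,y}(I_{\tilde D}\otimes \sqrt{S_b^y})\,\tilde\gamma\,(I_{\tilde D}\otimes \sqrt{S_b^y})\otimes |by\rangle\langle by|_C
\end{equation*}
on $\tilde D\otimes E\otimes C$; it is classical on $C = B'\otimes Y$. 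Because $C$ is classical, the optimal POVM for guessing $x$ from $(E,C)$ can be chosen conditional on $(b,y)$, giving $\Dist\{\Tr_{\tilde D}[\tilde R_a^x\Lambda]\mid x\in\mathcal{X}\} = 1-\delta_a$. Corollary~\ref{classicalinfocor} then shows that the dephasing channel $\Pi_a(X) = \sum_x(\tilde R_a^x\otimes I_C)X(\tilde R_a^x\otimes I_C)$ is $(2\sqrt{\delta_a}+\delta_a)$-commutative with the marginal $\beta := \Lambda^{\tilde D C}$.

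Fix any ordering $a_1,\ldots,a_k$ of $\mathcal{A}$ and set $\beta_j := (\Pi_{a_j}\circ\cdots\circ\Pi_{a_1})(\beta)$. The triangle inequality for $\epsilon$-commutativity recorded just after the definition iterates to give $\|\beta_{i-1}-\beta\|_1 \le \sum_{j<i}(2\sqrt{\delta_{a_j}}+\delta_{a_j})$ for every $i$. I define the candidate classical correlation by
\begin{equation*}
\overline{p}_{a_i b}^{xy} \;:=\; |\mathcal{B}|\,\Tr\bigl[(\tilde R_{a_i}^x\otimes |by\rangle\langle by|)\,\beta_{i-1}\bigr],
\end{equation*}
which is precisely the correlation produced by the strategy in which Alice coherently pre-measures $\tilde R_{a_1},\ldots,\tilde R_{a_{i-1}}$, stores the outcomes in classical registers, discards $\tilde D$, and on input $a_i$ simply reads off the stored outcome, while Bob plays unchanged. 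Since Alice's side of the resulting shared state is purely classical, the state is separable, so $(\overline{p}_{ab}^{xy})$ is a classical correlation.

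The key observation is that the original correlation is a linear functional of the same marginal: $p_{a_i b}^{xy} = |\mathcal{B}|\,\Tr[(\tilde R_{a_i}^x\otimes |by\rangle\langle by|)\beta]$. Since $\{\tilde R_{a_i}^x\otimes|by\rangle\langle by|\}_{x,b,y}$ is a POVM on $\tilde D\otimes C$, the $\ell_1$-distance between the distributions it induces on $\beta$ and on $\beta_{i-1}$ is at most $\|\beta-\beta_{i-1}\|_1$, so $\sum_{b,x,y}|p_{a_i b}^{xy}-\overline{p}_{a_i b}^{xy}| \le |\mathcal{B}|\|\beta-\beta_{i-1}\|_1$. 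Summing over $i$, dividing by $|\mathcal{A}||\mathcal{B}|$, and using $\sum_{i}\sum_{j<i}\epsilon_{a_j}\le |\mathcal{A}|\sum_j\epsilon_{a_j}$, the target quantity is bounded by $\sum_j(2\sqrt{\delta_{a_j}}+\delta_{a_j})$. Cauchy--Schwarz in the form $\sum_j\sqrt{\delta_{a_j}}\le|\mathcal{A}|\sqrt{\delta}$, combined with $\sum_j\delta_{a_j}=|\mathcal{A}|\delta$ and $\delta\le 1$, finishes.

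The main obstacle is that Corollary~\ref{classicalinfocor} only controls disturbance on the marginal $\tilde D\otimes C$ and says nothing directly about the $E$ system; this is harmless only because the observed correlation is already extractable by a POVM on $\tilde D\otimes C$, Bob's measurement having been pre-applied and its outcome recorded in the classical $Y$-register of $C$. A related delicate point is that the iterated triangle inequality works only because each $\Pi_{a_j}$ is compared to the \emph{original} reference state $\beta$; a naive attempt to bound $\|\beta_j-\beta_{j-1}\|_1$ directly would fail, since the Corollary's hypothesis (the guessing bound $1-\delta_{a_j}$) is a property of the original strategy and need not survive the earlier dephasings.
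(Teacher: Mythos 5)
Your construction is essentially the paper's argument in lightly repackaged form: the paper likewise copies Alice's outcomes into classical registers via pinching maps, invokes Corollary~\ref{classicalinfocor} against a state in which Bob's measurement outcome has been recorded classically, and chains the resulting $\epsilon$-commutativity bounds with the triangle inequality. Your one structural variation --- folding Bob's input $b$ into the classical register $C$ with uniform weight and invoking the corollary once per $a$ with the averaged quantity $\delta_a$, rather than once per pair $(a,b)$ with $\delta_{ab}$ --- is legitimate and loses nothing, since for states block-diagonal on $C$ the optimal discrimination decomposes blockwise, so $\Dist\{\Tr_{\tilde D}[\tilde R_a^x\Lambda]\mid x\}=1-\delta_a$ as you claim. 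The identification $p_{ab}^{xy}=|\mathcal{B}|\Tr[(\tilde R_a^x\otimes|by\rangle\langle by|)\beta]$ and the contraction of the induced $\ell_1$-distance under the POVM $\{\tilde R_{a_i}^x\otimes|by\rangle\langle by|\}_{x,b,y}$ are also correct.

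However, the final accounting does not deliver the stated constant. Writing $n=|\mathcal{A}|$, the triangle inequality gives $\sum_i\|\beta-\beta_{i-1}\|_1\le\sum_j(n-j)\bigl(2\sqrt{\delta_{a_j}}+\delta_{a_j}\bigr)$; by discarding the weights via $\sum_i\sum_{j<i}\le n\sum_j$ and only then applying Cauchy--Schwarz, you end with $2n\sqrt{\delta}+n\delta\le 3n\sqrt{\delta}$, which exceeds the claimed $\sqrt{3\delta}\,n=\sqrt{3}\,n\sqrt{\delta}$ by a factor of $\sqrt{3}$. The missing factor is recovered by keeping the weights inside Cauchy--Schwarz: $\sum_j(n-j)\cdot 3\sqrt{\delta_{a_j}}\le 3\sqrt{\sum_j(n-j)^2}\sqrt{\sum_j\delta_{a_j}}\le 3\sqrt{n^3/3}\sqrt{n\delta}=\sqrt{3}\,n^2\sqrt{\delta}$, which after dividing by $n$ gives exactly $\sqrt{3\delta}\,|\mathcal{A}|$; this is precisely how the paper closes the estimate. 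One further small point: your verbal description of the classical strategy (pre-measure $\tilde R_{a_1},\dots,\tilde R_{a_{i-1}}$ and on input $a_i$ read off the stored outcome) makes the set of pre-measurements depend on the input, which would not define a single shared separable state; you should instead have Alice pre-measure all of $a_1,\dots,a_n$ in order and observe that the later pinchings do not alter the joint marginal of the $i$-th outcome register with $E$ and $C$, so your formula in terms of $\beta_{i-1}$ still computes the resulting correlation.
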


\begin{proof}
We can assume without loss of generality that the measurements $\{ \{ R_a^x \}_x \}_a$ are all projective.
We begin with the same strategy as in the proof of  Proposition~5 in \cite{MSCertifying:2016}.
For each $a \in \mathcal{A}$, let $V_a = \mathbb{C}^\mathcal{X}$,
and let $\Phi_a \colon L ( D ) \to L ( V_a \otimes D )$
be the nondestructive measurement defined by 
\begin{eqnarray}
\Phi_a ( T ) & = & \sum_{x \in \mathcal{X}} \left| x \right> \left< x \right|
\otimes R_a^x T R_a^x.
\end{eqnarray}
Let $\Phi_a^{V_a} = \Tr_D \circ \Phi_a$ and let $\Phi_a^D = \Tr_{V_a} \circ
\Phi_a$. Likewise let $W_b = \mathbb{C}^\mathcal{Y}$ for each $b \in \mathcal{B}$, let
$\Psi_b \colon L ( D ) \to L ( W_b \otimes D )$ be the nondestructive measurement
defined by
\begin{eqnarray}
\Psi_b ( T ) & = & \sum_{y \in \mathcal{Y}} \left| y \right> \left< y \right|
\otimes \sqrt{S_b^y} T \sqrt{S_b^y}.
\end{eqnarray}
Let $\Psi_b^{W_b} = \Tr_E \circ \Psi_b$ and $\Psi_b^E = \Tr_{W_b} \circ \Psi_b$.

Assume without loss of generality that $\mathcal{A} = \{ 1, 2, \ldots, n \}$.
Let $\Lambda  \in L ( V_1 \otimes \ldots \otimes V_n \otimes D \otimes E )$
be the state that arises from applying the superoperators
$\Phi_1 \otimes I_E, \ldots, \Phi_n \otimes I_E$, in order, to $\gamma$.  Let
$(\overline{p}_{ab}^{xy} )$ be the correlation that arises
from Alice and Bob sharing the reduced state $\Lambda^{V_1 \ldots V_n E }$,
Alice obtaining her output on input $a$ from
the register $V_a$, and Bob obtaining his output from his prescribed measurements
$\{ \{S_b^y \}_y \}_b$ to $E$.  Since the state $\Lambda^{V_1 \ldots V_n E }$
is a separable state over the bipartition $(V_1 \ldots V_n \mid E )$, the correlation
$( \overline{p}_{ab}^{xy} )$ is classical.

If Alice and Bob share the measured state $(I_D \otimes \Psi_b ) (\gamma)$ partitioned
as $(D \mid E W_b)$, then the probability that
Bob can guess Alice's outcome when she measures with $\{ R_a^x \}_x$ is given by 
\begin{eqnarray}
\delta_{ab} & := &  1 - \sum_y \Dist \{ \rho_{ab}^{xy} \mid x \in \mathcal{X} \}.
\end{eqnarray}
By Corollary~\ref{classicalinfocor}, the operator $(\Phi_a^D \otimes I_{W_b} )$
is $(2 \sqrt{\delta_{ab}} + \delta_{ab} )$-commutative with 
$(I_D \otimes \Psi_b^{W_b} ) \gamma$.

We wish to compare $(p_{ab}^{xy} )$ and $(\overline{p}_{ab}^{xy} )$.  
For any $a, b$ the probability
vector $(\overline{p}_{ab}^{xy})_{xy}$ describes the joint distribution
of the registers $V_a W_b$ under the density operator
\begin{eqnarray}
((\Phi_a^{V_a} \circ \Phi^D_{a-1} \circ \Phi^D_{a-2} \circ \cdots \circ \Phi^D_1 ) \otimes \Psi^{W_b}_b) \gamma,
\end{eqnarray}
which by the previous paragraph is within trace-distance $\sum_{i=1}^{a-1}
(2 \sqrt{\delta_{ab} } + \delta_{ab} )$ from the distribution described by
$(p_{ab}^{xy})_{xy}$:
\begin{eqnarray}
(\Phi_a^{V_a}  \otimes \Psi^{W_b}_b) \gamma.
\end{eqnarray}

Thus we have the following, in which we use the Cauchy-Schwarz inequality:
\begin{eqnarray}
\sum_{abxy} \left| p_{ab}^{xy} - \overline{p}_{ab}^{xy} \right| & \leq & 
\sum_{ab} \sum_{i=1}^{a-1} ( 2 \sqrt{\delta_{ib}} + \delta_{ib} ) \\
& = & \sum_{ab} (n-a) (2 \sqrt{\delta_{ab}} + \delta_{ab} ). \\
& \leq & \sum_{ab} (n-a) 3 \sqrt{\delta_{ab}} \\
& \leq & 3 \sqrt{ \sum_{ab} (n-a)^2 } \sqrt{ \sum_{ab} \delta_{ab}}  \\
& = & 3 \sqrt{ \sum_{ab} (n-a)^2 }  \sqrt{n | \mathcal{B} | \delta} \\
& = & 3 \sqrt{ | \mathcal{B}  | \sum_{a} (n-a)^2 } \sqrt{n | \mathcal{B} | \delta} \\
& = & 3 | \mathcal{B}  | \sqrt{ \sum_{a} (n-a)^2 } \sqrt{n \delta} \\
& \leq & 3 | \mathcal{B}| \sqrt{ n^3/3 } \sqrt{n \delta},
\end{eqnarray}
which simplifies to the desired bound.
\end{proof}

Proposition~\ref{uniformprop} is useful for addressing any game $(q, H)$
where the distribution $q$ is uniform (i.e., $q(a, b) = 1/(|\mathcal{A} | | \mathcal{B} | )$.)  We prove the following theorem
which applies to more general games.

\begin{theorem}
\label{robustthm}
Let $G = (q, H)$ be a complete-support game and let 
\begin{eqnarray}
\Gamma & = & ( D, E, \{ \{ R_a^x \}_x \}_a , 
\{ \{ S_b^y \}_y \}_b , \gamma )
\end{eqnarray}
be a two-player strategy.
Let
\begin{eqnarray}
\epsilon & = & 1 - \sum_{ab} q(a, b)  \sum_y \Dist \{ \rho_{ab}^{xy} \mid x \in \mathcal{X} \}.
\end{eqnarray}
Then, the score achieved by $\Gamma$ exceeds the best classical score $\omega_c (G )$ by at most
$C_G \sqrt{\epsilon}$, where
\begin{eqnarray}
\label{cg}
C_G & = & (3/2) \sqrt{ \sum_{ab} q ( b ) \left( \mathbf{P}_q ( a \mid b ) \right)^{-1}}.
\end{eqnarray}
\end{theorem}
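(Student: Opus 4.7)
The plan is to follow the construction of Proposition \ref{uniformprop} almost verbatim, but with two modifications: (i) account for the $q$-weighting in the score difference, and (ii) average the construction over uniformly random orderings of Alice's inputs, which shaves an extra factor of $1/2$ off the constant.

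For any ordering $\sigma$ of $\mathcal{A}$, construct a classical correlation $\overline{p}^\sigma$ by applying Alice's nondestructive measurements $\Phi_a$ to $\gamma$ in the order $\sigma$, storing each outcome in a register $V_a$; on input $a$, Alice reads $V_a$ while Bob performs his original POVM $\{S_b^y\}$. The resulting state on $V_1 \otimes \cdots \otimes V_n \otimes E$ is separable across the bipartition between Alice's registers and $E$, so $\overline{p}^\sigma$ is classical and its score is at most $\omega_c(G)$. Writing $\delta_{ib} = 1 - \sum_y \Dist\{\rho_{ib}^{xy} \mid x\}$, Corollary \ref{classicalinfocor} together with the triangle inequality for commutativity yields
\begin{eqnarray*}
\sum_{xy} \left| p_{ab}^{xy} - (\overline{p}^\sigma)_{ab}^{xy} \right| & \leq & \sum_{i : \sigma(i) < \sigma(a)} (2\sqrt{\delta_{ib}} + \delta_{ib}) \\ & \leq & 3\sum_{i : \sigma(i) < \sigma(a)} \sqrt{\delta_{ib}},
\end{eqnarray*}
using $\delta_{ib} \leq \sqrt{\delta_{ib}} \leq 1$. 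Since $H$ takes values in $[0,1]$, the $\sigma$-dependent score difference is bounded by $3\sum_{ab} q(a,b) \sum_{i:\sigma(i)<\sigma(a)} \sqrt{\delta_{ib}}$.

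Averaging over a uniformly random ordering $\sigma$, we have $\Pr[\sigma(i) < \sigma(a)] = 1/2$ for $i \neq a$, so there exists an ordering $\sigma^\star$ whose score difference is at most
\begin{eqnarray*}
\frac{3}{2}\sum_{ab} q(a, b) \sum_{i \neq a} \sqrt{\delta_{ib}} & \leq & \frac{3}{2} \sum_b q(b) \sum_i \sqrt{\delta_{ib}}.
\end{eqnarray*}
A weighted Cauchy-Schwarz with weights $q(i, b)$ then gives
\begin{eqnarray*}
\sum_b q(b) \sum_i \sqrt{\delta_{ib}} & \leq & \sqrt{\sum_{ib} \frac{q(b)^2}{q(i,b)}} \cdot \sqrt{\sum_{ib} q(i, b) \delta_{ib}} \\ & = & \sqrt{\sum_{ab} \frac{q(b)}{\mathbf{P}_q(a\mid b)}} \cdot \sqrt{\epsilon},
\end{eqnarray*}
which combines with the previous bound to give exactly $C_G\sqrt{\epsilon}$.

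The main obstacle is the averaging step: any single fixed ordering yields a constant of $3$ rather than the claimed $3/2$, so the factor of $1/2$ gained from the random-ordering argument is essential for matching the theorem's constant. The complete-support hypothesis enters by ensuring that the terms $q(b)^2/q(i,b)$ appearing in the Cauchy-Schwarz step are finite.
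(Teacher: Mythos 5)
Your proof is correct and reaches the stated constant, and its core --- the sequential nondestructive measurements $\Phi_a$ producing a classical correlation on $V_1\otimes\cdots\otimes V_n\otimes E$, the use of Corollary~\ref{classicalinfocor} together with the triangle inequality for $\epsilon$-commutativity to bound $\sum_{xy}\left|p_{ab}^{xy}-\overline{p}_{ab}^{xy}\right|$ by the sum of $(2\sqrt{\delta_{ib}}+\delta_{ib})$ over the measurements applied before $\Phi_a$, and the weighted Cauchy--Schwarz against $q(a,b)$ yielding $\sqrt{\sum_{ab}q(b)^2/q(a,b)}\cdot\sqrt{\epsilon}$ --- coincides with the paper's argument. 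Where you genuinely diverge is in how the factor of $1/2$ is recovered. The paper keeps the fixed ordering $1,\ldots,n$ throughout, obtains the $\ell_1$ bound $\sum_{abxy}q(a,b)\left|p_{ab}^{xy}-\overline{p}_{ab}^{xy}\right|\le 2C_G\sqrt{\epsilon}$, and then invokes the elementary fact that for probability vectors $\mathbf{t},\mathbf{s}$ and any $u\in[0,1]^m$ one has $\sum_i u_i(t_i-s_i)\le\frac{1}{2}\sum_i|t_i-s_i|$; applied with $u=H$, this halves the bound on the score difference. You instead bound the score difference by the full $\ell_1$ distance (using only $H\le 1$) and recover the $1/2$ by averaging over a uniformly random ordering of $\mathcal{A}$ via $\Pr[\sigma(i)<\sigma(a)]=1/2$. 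Both routes are valid, so your closing claim that the random-ordering step is \emph{essential} to match the constant is the one inaccuracy: a fixed ordering suffices once the $\frac{1}{2}\|\cdot\|_1$ bound on the score difference is used. In fact the two devices are independent, and combining them (average over orderings, then apply the $\frac{1}{2}\|\cdot\|_1$ bound) would improve the conclusion to $\frac{1}{2}C_G\sqrt{\epsilon}$.
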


\begin{proof}
Define $\overline{p}_{ab}^{xy}$ and $\delta_{ab}$ as in Proposition~\ref{uniformprop}.  We have the following (again
using the Cauchy-Schwartz inequality):
\begin{eqnarray}
&& \sum_{abxy} q ( a, b ) | p_{ab}^{xy} - \overline{p}_{ab}^{xy} |
\\
& \leq & \sum_{ab} q ( a, b ) \sum_{i=1}^{a-1} (2 \sqrt{\delta_{ib}} + \delta_{ib} ) \\
& \leq & \sum_{ab} q ( a, b ) \sum_{i=1}^{a-1} 3 \sqrt{\delta_{ib}}   \\
& = & \sum_{ab} \left( \sum_{k = a+1}^n q ( k, b ) \right) 3 \sqrt{\delta_{ab}} \\
& = & \sum_{ab} \left( \frac{\sum_{k=a+1}^n q ( k , b ) }{\sqrt{q ( a, b ) } }\right) 3 \sqrt{ q ( a, b ) \delta_{ab}} \\
& \leq & 3 \sqrt{ \sum_{ab} \frac{ (\sum_{k=a+1}^n q ( k , b ) )^2}{q ( a, b ) }} \sqrt{ \sum_{ab} q ( a, b ) \delta_{ab}} \\
& \leq & 3 \sqrt{ \sum_{ab} \frac{ (\sum_{k=a+1}^n q ( k , b ) )^2}{q ( a, b ) }} \sqrt{ \epsilon } \\
& \leq & 3 \sqrt{ \sum_{ab} \frac{ q ( b )^2}{q ( a, b ) }}\sqrt{\epsilon} \\
& \leq & 2 C_G \sqrt{ \epsilon} \label{finalbound}
\end{eqnarray}
Note that for any probability vectors $\mathbf{t} = (t_1, \ldots, t_m)$ and $\mathbf{s} = (s_1, \ldots, s_m)$ and any arbitrary vector
$(u_1, \ldots, u_m) \in [0, 1]^m$, we have
\begin{eqnarray}
\sum_i u_i ( t_i - s_i ) \leq \frac{1}{2} \sum \left\| t_i - u_i \right\|.
\end{eqnarray}
Applying this fact to the probability vectors $(q(a,b)p_{ab}^{xy})_{abxy}$
and $(q(a,b)\overline{p}_{ab}^{xy} )_{abxy}$  and the 
vector $(H ( a, b, x, y ) )_{abxy}$ implies that the difference
between the score achieved by $(p_{ab}^{xy})$ and the score
achieved by $(\overline{p}_{ab}^{xy})$ is no more than half the
quantity (\ref{finalbound}), which yields the desired result.
\end{proof}

\section{Discussion}

We have given an explicit lower bound for the amount by which
any superclassical strategy for a complete support game must increase
the randomness possessed by one of the players. 
In other words, we have achieved one-shot \textit{blind}
randomness expansion.  If $G$ is a complete support game and Alice and Bob achieve score $w$, then
Bob's probability of guessing her input given her output is at most
\begin{eqnarray}
f_G ( \omega ) & = & \left\{ \begin{array}{cl}
1 - ( w - \omega_c ( G ) )^2  / C_G & \textnormal{ if } w \geq \omega_c ( G ) \\
1 & \textnormal{ otherwise,} \end{array} \right.
\end{eqnarray}
where $C_G$ denotes the constant defined in equation (\ref{cg}).  
A good next step would be to find methods to optimize this bound.
Self-testing results (e.g., \cite{mys:2012}, \cite{WuBancal:2016}) provide
strong constraints on the behavior of players for certain nonlocal games,
and may be useful for providing better bounds on Alice's unpredictability
when the expected score is close to optimal.

An interesting goal would be to prove a multi-shot version
of Theorem~\ref{robustthminformal}, e.g., a proof that Alice's outputs across multiple rounds
have high smooth min-entropy from Bob's perspective.    This would
provide a full proof of blind randomness expansion.  (One
consequence of such a proof is that it would be possible to reduce the number
of devices needed for unbounded randomness expansion.  This is known
to be possible with four spatially separated devices \cite{MS14v4}, by cross-feeding
two copies of a bounded randomness expansion protocol.  Blind randomness
expansion would allow us to reduce this number to three.)
The recent entropy accumulation theorem
\cite{Dupuis:2016} proves lower bounds on min-entropy in diverse scenarios, and it will be interesting
to see if it can be generalized to cover blind randomness expansion as well.
A possible approach is to find an appropriate chain rule for Renyi entropies
(see Section 3 of \cite{Dupuis:2016}) which applies to the blind randomness
expansion scenario.

\paragraph{Acknowledgments.} 
We are indebted to Laura Mancinska for discussions that helped
us to figure out our proof of Theorem~\ref{robustthm}, and to Jedrzej Kaniewski for background
on two-party cryptography.
This research was supported in part by US NSF
Awards 1500095, 1216729, 1526928, and 1318070.

\bibliographystyle{apsrev}
\bibliography{../quantumsec}

\end{document}